

\documentclass[preprint,12pt]{elsarticle}



\usepackage{amssymb,amsmath}
\usepackage{multirow}
\usepackage{algorithm}
\usepackage{algorithmic}
\usepackage{color}
\usepackage{graphicx}


\newtheorem{theorem}{Theorem}
\newtheorem{lemma}{Lemma}

\newtheorem{definition}{Definition}
\newdefinition{rmk}{Remark}
\newproof{proof}{Proof}
\newcommand{\be}{\begin{equation}}
\newcommand{\ee}{\end{equation}}
\def\bea{\begin{eqnarray}}
\def\eea{\end{eqnarray}}

\newcommand{\bbmat}{\left\{ \begin{array}}
\newcommand{\ebmat}{\end{array} \right. }

\def\ln{{\rm ln}}

\def\b0{\boldsymbol{\rm 0}}



\begin{document}

\begin{frontmatter}



\title{Streaming Algorithms for the $k$-Submodular Cover Problem}

\author[mymainaddress]{Wenqi Wang}
\ead{wenqiwangcc@nnu.edu.cn}
\address[mymainaddress]{School of Mathematical Science \& Institute of Mathematics, Nanjing Normal University and Key Laboratory of Ministry of Education Numerical Simulation of Large Scale Complex Systems,
Nanjing, 210023, China.}

\author[add2]{Gregory Gutin}
\ead{G.Gutin@rhul.ac.uk}
\address[add2]{Royal Holloway, University of London, Egham, Surrey, TW20 0EX, UK,}

\author[add3]{Yaping Mao}
\ead{maoyaping@ymail.com}
\address[add3]{School of Mathematics and Statistics and Academy of Plateau Science and Sustainability, Qinghai Normal University, Xining, Qinghai 810008, China}

\author[add4]{Donglei Du}
\ead{ddu@unb.ca}
\address[add4]{Faculty of Management, University of New Brunswick, Fredericton, New Brunswick, E3B9Y2, Canada,}

\author[mymainaddress]{Xiaoyan Zhang\corref{mycorrespondingauthor}}
\cortext[mycorrespondingauthor]{Corresponding author}
\ead{zhangxiaoyan@njnu.edu.cn}

\begin{abstract}
Given a natural number $k\ge 2$, we consider the $k$-submodular cover problem ($k$-SC). The objective is to find a minimum cost subset of a ground set $\mathcal{X}$ subject to the value of a $k$-submodular utility function being at least a certain predetermined value $\tau$. For this problem, we design a bicriteria algorithm with a cost at most $O(1/\epsilon)$ times the optimal value, while the utility is at least $(1-\epsilon)\tau/r$, where $r$ depends on the monotonicity of $g$.
\end{abstract}

\begin{keyword}
$k$-submodular function; Submodular Cover; Approximation Algorithms; Streaming Algorithms;

\end{keyword}






\end{frontmatter}







\section{Introduction}
Let $\mathbb{R}_+$  denote the set of all non-negative real numbers. Given a ground set $\mathcal{X}$, a set function $g:2^\mathcal{X}\rightarrow \mathbb{R}_+$ is called a submodular function if for any subset $S, T\subseteq \mathcal{X}$, $g(S)+g(T)\geq g(S\cup T)+g(S\cap T)$. The study of submodular set functions has attracted much attention due to their applications to problems arising in various areas, such as influence maximization \cite{Kempe2003, Stoica2020} and recommendation system \cite{Serbos2017, Wang2021}.
A natural extension of a submodular function is a $k$-submodular function, where an argument of a $k$-submodular function is a collection of $k$ disjoint subsets $\textbf{s}=(S_1,S_2,\dots, S_k)$, rather than a single set in the traditional subnmodular set function. Therefore, a submodular function is a special case of $k$-submodular function when $k=1$. Huber and Kolmogorov \cite{Huber2012} introduced the concept of a $k$-submodular function and studied minimization of $k$-submodular functions. Due to its universality and diminishing return property, $k$-submodular functions have many applications in real world, such as $k$-topic influence maximization problem \cite{Pham2022b, Qian2017, Rafiey2020} and sensor placement with $k$ types of sensors \cite{Ene2022, Ohsaka2015}.

Maximization of $k$-submodular functions under various constraints have also been studied in recent years. For the unconstrained problem, Ward and \v{Z}ivn\'{y} \cite{Ward2016} first designed a deterministic $1/3$-approximation algorithm and a randomized $1/(1+a)$ approximation algorithm where $a=\max\{1, \sqrt{(k-1)/4}\}$.
Later, Iwata et al. \cite{Iwata2016} provided a $1/2$ approximation randomized algorithm for the non-monotone case and a $\frac{k}{2k-1}$-approximation algorithm for the monotone case.  Using the framework of \cite{Iwata2016}, Oshima \cite{Oshima2021} improved the approximation ratio to $\frac{k^2+1}{2k^2+1}$ for the non-monotone case if $k\geq 3$.
Ohsaka and Yoshida \cite{Ohsaka2015} studied the total size (i.e., $|\cup_{i\in[k]}S_i|\leq B$ where $B\geq0$) and individual size  constraints (i.e., $|S_i|\leq B_i$ where $B_i\geq0$ $\forall i\in [k]$) when a $k$-submodular function is monotone, achieving $\frac{1}{2}$ and $\frac{1}{3}$-approximation ratios, respectively.
Further, Xiao et al. \cite{Xiao2022} studied maximization of non-monotone $k$-submodular functions under individual size constraints and presented an algorithm with approximate ratio $\frac{1}{\max_{i\in [k]}B_i+4}$.
Sakaue \cite{Sakaue2017} studied maximization of a monotone $k$-submodular function under a matroid constraint and provided a $\frac{1}{2}$-approximation algorithm. For a knapsack constraint, Tang \cite{Tang2021} gave a 0.4-approximation algorithm for the monotone case. This result was improved to $\frac{1}{2}-\epsilon$ \cite{Wang2021} using the multilinear extension method.

In applications, due to the increasing scale in data volumes and the nature of problems, it is impractical to store all data and hence necessary to design streaming algorithms that could process data as it arrives. Streaming algorithms have the advantage of using limited memory and less time, but still could return a reasonable solution. Pham et al. \cite{Pham2022a} studied streaming algorithms for the $k$-submodular maximization under a knapsack constraint, achieving $\frac{1}{4}-\epsilon$ and $\frac{1}{5}-\epsilon$ approximation ratios for monotone and non-monotone cases, respectively. Ene and Nguyen \cite{Ene2022} extended the primal dual method for streaming submodular maximization under a matching constraint \cite{Levin2021} to the monotone $k$-submodular maximization under an individual size constraint and a partition matroid constraint; the approximation ratios are both at least 0.25; and the approximation ratio asymptotically increases to 0.2953 and 0.3178 as $B$ tends to infinity where $B=\min_{i\in[k]}B_i$.
Spaeh, Ene and Nguyen \cite{Spaeh2023} further used the local search method to design algorithms for the non-monotone $k$-submodular maximization problem, achieving at least 0.125 and 0.175 approximation ratios for an individual size constraint and a partition matroid constraint, respectively. The approximation ratio would asymptotically increase to 0.1589 and 0.1921 as $B$ tends to infinity.

So far, many researchers have developed algorithms for maximizing $k$-submodular problems under various constraints. These problems  model many real-life applications.
Inspired by these problems, Pham et al. \cite{NHPham2022} proposed the $k$-submodular cover ($k$-SC) problem. This problem  model the problem of influence threshold with $k$ topics \cite{Pham2022b}. Given a utility function $g$, the $k$-SC problem is to find a solution such that its utility cost is at least $\tau$ while minimizing the size of the solution. We extend the $k$-submodular cover problem to the weighted $k$-submodular cover problem where the objective function is a weighted cost function. The definition of weighted $k$-Submodular Cover is introduced as follows.

\begin{definition}
\rm We are given a ground set $\mathcal{X}$, a weighted cost function $w:\mathcal{X}\rightarrow \mathbb{R}_+$, a $k$-submodular utility function $g:(k+1)^\mathcal{X}\rightarrow \mathbb{R}_+$, and a utility threshold $\tau$ such that $\tau\leq \max_{\textbf{s}\in (k+1)^\mathcal{X}}g(\textbf{s})$. Let $w(\textbf{x})=\sum_{i\in[k]}\sum_{x\in S_i}w(x),$ where $[k]=\{1,2,\dots,k\}$. Our goal is to find a solution $\textbf{x}\in (k+1)^\mathcal{X}$ such that
\begin{equation}\label{Pro1}
\begin{aligned}
\min \quad & w(\textbf{x})\\
\mbox{s.t.} \quad & g(\textbf{x})\geq \tau\\
& \textbf{x}\in (k+1)^\mathcal{X}.
\end{aligned}
\end{equation}
\end{definition}

The Submodular Cover is a special case of $k$-SC; this problem and its variants have numerous literature due to their wide applications in real world, e.g., \cite{Crawford2023,Mirzasoleiman2015, Mirzasoleiman2016, Norouzi2016, Wang2023,Wolsey1982}. Since Submodular Cover is well-known to be NP-hard, $k$-SC is also NP-hard. Some researchers designed algorithms from the bicriteria perspective for the submodular cover problem because of the method they used or the nature of submodular function.
An algorithm is called an {\em $(\alpha,\beta)$-bicriteria algorithm} if it is satisfies $w(S)\leq \alpha w(V)$ and $g(S)\geq \beta \tau$, where $S$ is the output solution of the algorithm and $V$ is the optimal solution. When $g$ is a monotone DR-submodular function, where the domain of a DR-submodular function is an integer lattice, Soma and Yoshida \cite{Soma2015} presented a threshold based greedy algorithm, and proved that the algorithm has $((1+3\epsilon)\rho(1+\ln\frac{d}{\beta}), 1-\delta)$-bicriteria approximation ratio where $0<\epsilon, \delta <1$, $\rho$ is the curvature of $w$ (for its definition, see \cite{Soma2015}), $d$ and $\beta$ are instance dependent parameters. For the streaming setting, \cite{Norouzi2016} designed a $(\frac{2}{\epsilon}, 1-\epsilon)$-bicriteria algorithm for the submodular cover problem with uniform cost (i.e. $w(x)=1, \forall x\in \mathcal{X}$).
If $g$ is non-monotone submodular, using a $\gamma$-approximation algorithm for unconstrained non-monotone submodular maximization problem as the subroutine, Crawford \cite{Crawford2023} designed a one-pass algorithm with $((1+\epsilon)(4/\epsilon^2+1), \gamma(1-\epsilon))$-bicriteria approximate ratio.
Pham et al. \cite{NHPham2022} studied the $k$-Submodular Cover problem with uniform cost, achieving $(\frac{1-\epsilon^2}{2\epsilon},\frac{1-\epsilon}{2})$ approximation ratio.

\textbf{Our Contribution}. In this paper, we consider $k$-SC in the streaming setting. We generalize the previous work \cite{NHPham2022} from the uniform cost to the general non-negative weighted cost. Three algorithms are devised to resolve this problem. The first algorithm is based on the assumption of knowing the optimal solution value. Later, we provide Algorithm \ref{AL2} with streaming twice in the algorithm; such procedure could remove the assumption. We design Algorithm 3 which only streams the data set only once, but still maintains the bicriteria approximation ratio. We also analyze the memory and update time of an element of the algorithms.

\section{Preliminaries}

Given a finite set $\mathcal{X}$ of size $n$ and a positive integer $k$, we define $[k]=\{1,2,\dots, k\}$. Let $(k+1)^{\mathcal{X}}=\{(S_1, S_2\dots S_k)|S_i\subseteq \mathcal{X}, S_i\cap S_j =\emptyset, \text{ for all } i,j \in [k]\}$ be the set of all $k$ disjoint subsets of $\mathcal{X}$. Each subset in $(k+1)^\mathcal{X}$ is called $k$-set. We denote $\textbf{0}$ as $(\emptyset, \emptyset, \dots \emptyset)$ for simplicity. Given a $k$-set $\textbf{s}=(S_1,S_2,\dots, S_k)\in (k+1)^\mathcal{X}$, let $\textbf{s}(x)=i$ if $x\in S_i$ and $i$ is called the $position$ in $\textbf{x}$; $\textbf{s}(x)=0$ if $x\notin S_i$.

For $k$-sets $\textbf{s} = (S_1, S_2\dots S_k)$ and $\textbf{t} = (T_1, T_2\dots T_k)\in (k+1)^\mathcal{X}$, two operations meet and join $\textbf{s}\sqcap \textbf{t}$ and $\textbf{s}\sqcup \textbf{t}$ is defined as following
\[
\textbf{s}\sqcap \textbf{t} = (S_1\cap T_1, S_2\cap T_2,\dots, S_k\cap T_k),
\]
\[
\textbf{s}\sqcup \textbf{t} = (S_1\cup T_1\backslash \bigcup\limits_{i\neq1}(S_i\cup T_i), 
\dots, S_k\cup T_k\backslash \bigcup\limits_{i\neq k}(S_i\cup T_i)).
\]

A function $g: (k+1)^{\mathcal{X}}\rightarrow \mathbb{R}_+$ is {\em $k$-submodular} if $k$-sets $\textbf{s}$ and $\textbf{t}\in (k+1)^{\mathcal{X}}$, we have
\[
g(\textbf{s}) + g(\textbf{t}) \geq g(\textbf{s}\sqcap \textbf{t}) + g(\textbf{s}\sqcup\textbf{t}).
\]
In this paper, we assume that $g$ is non-negative and normalized, i.e., $g(\textbf{s})\geq 0$ for all $\textbf{s}\in (k+1)^\mathcal{X}$ and $g(\textbf{0})=0$. We also assume that an oracle is given to access the $k$-submodular function; the oracle is a black box returning $g(\textbf{s})$ for a given $k$-set $\textbf{s}$.
We write $\textbf{s}\sqsubseteq \textbf{t}$ if and only if $S_i\subseteq T_i$ for each $i\in [k]$. A function $g$ is {\em monotone} if for any $\textbf{s}$ and $\textbf{t}\in (k+1)^\mathcal{X}$ with $\textbf{s}\sqsubseteq \textbf{t}$, we have $g(\textbf{s})\leq g(\textbf{t})$. Let $E(\textbf{s})=S_1\cup S_2 \dots \cup S_k$. Let $(x,i)$ denote a pair where $x\in \mathcal{X}$ and $i$ is the position. Therefore, a $k$-set $\textbf{s}$ can also be written as $\textbf{s}=\{(x_1,i_1),(x_2,i_2),\dots,(x_j,i_j)\}$ where $x_p\in E(\textbf{s})$ and $i_p=\textbf{s}(x_p)$ for $1\leq p\leq j$. Adding an element $x \notin E(\textbf{s})$ to the $S_i$ can be written as $\textbf{s}\sqcup (x,i)$. Define
\[
\Delta_{x,i}g(\textbf{x}) = g(\textbf{s}\sqcup (x,i)) - g(\textbf{s})
\]
as the marginal gain. We introduce some basic but useful properties of a $k$-submodular function which is crucial in the analysis of our algorithm.
From \cite{Ward2016}, a $k$-submodular function $g$ has $orthant$ $submodularity$, that is,
\[
\Delta_{x,i}g(\textbf{s}) \geq \Delta_{x,i}g(\textbf{t})
\]
where $\textbf{s}\sqsubseteq \textbf{t}\in (k+1)^\mathcal{X}$, $x\notin E(\textbf{t})$ and $i\in[k]$. Also, a $k$-submodular function has {\em pairwise monotonicity}, i.e., if $x\notin E(\textbf{s})$, then
\[
\Delta_{x,i}g(\textbf{s}) + \Delta_{x,j}g(\textbf{s}) \geq 0
\]
for any $i, j\in[k]$ with $i\neq j$.

\begin{lemma}\label{lem1}\cite{Tang2021}
\rm Given a $k$-submodular function $g$, we have
\[
g(\textbf{t})\leq g(\textbf{s})+\sum\limits_{x\in E(\textbf{t})\backslash E(\textbf{s})}\Delta_{x,\textbf{t}(x)}g(\textbf{s}).
\]
for any $\textbf{s}$, $\textbf{t}\in (k+1)^{\mathcal{X}}$ such that $\textbf{s}\sqsubseteq \textbf{t}$.
\end{lemma}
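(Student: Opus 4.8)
The plan is to build up the inequality by adding the elements of $E(\textbf{t})\backslash E(\textbf{s})$ to $\textbf{s}$ one at a time, each at the position prescribed by $\textbf{t}$, and to control each marginal gain using orthant submodularity. First I would fix an arbitrary ordering $x_1, x_2, \dots, x_m$ of the elements of $E(\textbf{t})\backslash E(\textbf{s})$, and define the intermediate $k$-sets $\textbf{s}^{(0)} = \textbf{s}$ and $\textbf{s}^{(p)} = \textbf{s}^{(p-1)} \sqcup (x_p, \textbf{t}(x_p))$ for $1 \le p \le m$. Since $\textbf{s} \sqsubseteq \textbf{t}$ and each $x_p$ is placed in the same coordinate it occupies in $\textbf{t}$, one checks that $\textbf{s}^{(p)} \sqsubseteq \textbf{t}$ for every $p$; in particular $\textbf{s}^{(m)}$ agrees with $\textbf{t}$ on all of $E(\textbf{t})$, but may be missing nothing — actually $E(\textbf{s}^{(m)}) = E(\textbf{s}) \cup (E(\textbf{t})\backslash E(\textbf{s})) \supseteq E(\textbf{t})$, so $\textbf{t} \sqsubseteq \textbf{s}^{(m)}$, and combined with $\textbf{s}^{(m)} \sqsubseteq \textbf{t}$ this gives $\textbf{s}^{(m)} = \textbf{t}$ when $E(\textbf{s}) \subseteq E(\textbf{t})$; in general one only needs $g(\textbf{t}) \le g(\textbf{s}^{(m)})$, which I would get either from monotonicity-type reasoning on the relevant coordinates or, more carefully, directly from the telescoping below by noting $\textbf{s}^{(m)} \sqsupseteq \textbf{t}$ restricted appropriately.

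The core computation is the telescoping sum
\[
g(\textbf{s}^{(m)}) - g(\textbf{s}) = \sum_{p=1}^{m} \bigl( g(\textbf{s}^{(p)}) - g(\textbf{s}^{(p-1)}) \bigr) = \sum_{p=1}^{m} \Delta_{x_p, \textbf{t}(x_p)} g(\textbf{s}^{(p-1)}).
\]
Now for each $p$ we have $\textbf{s} = \textbf{s}^{(0)} \sqsubseteq \textbf{s}^{(p-1)}$, the element $x_p \notin E(\textbf{s}^{(p-1)})$ (it is added only at step $p$), and $\textbf{t}(x_p) \in [k]$. Orthant submodularity then gives $\Delta_{x_p, \textbf{t}(x_p)} g(\textbf{s}^{(p-1)}) \le \Delta_{x_p, \textbf{t}(x_p)} g(\textbf{s})$. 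Substituting into the telescoping identity yields
\[
g(\textbf{s}^{(m)}) \le g(\textbf{s}) + \sum_{p=1}^{m} \Delta_{x_p, \textbf{t}(x_p)} g(\textbf{s}) = g(\textbf{s}) + \sum_{x \in E(\textbf{t})\backslash E(\textbf{s})} \Delta_{x, \textbf{t}(x)} g(\textbf{s}),
\]
and the lemma follows once we argue $g(\textbf{t}) \le g(\textbf{s}^{(m)})$.

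The main obstacle is precisely this last point: showing $g(\textbf{t}) \le g(\textbf{s}^{(m)})$ without assuming $g$ is monotone (the lemma is stated for general $k$-submodular $g$). The cleanest resolution is to observe that $\textbf{t} \sqsubseteq \textbf{s}^{(m)}$ is generally false, so instead I would run the telescoping argument toward $\textbf{t}$ itself rather than toward $\textbf{s}^{(m)}$: order the elements so that we pass through $k$-sets all of which lie below $\textbf{t}$, i.e. work with the chain from $\textbf{s} \sqcap \textbf{t}$ up to $\textbf{t}$ — but since $\textbf{s} \sqsubseteq \textbf{t}$ we have $\textbf{s} \sqcap \textbf{t} = \textbf{s}$, and adding the elements of $E(\textbf{t}) \backslash E(\textbf{s})$ in their $\textbf{t}$-positions produces exactly $\textbf{t}$ at the final step. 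Hence $\textbf{s}^{(m)} = \textbf{t}$ outright, the auxiliary inequality is an equality, and no monotonicity is needed. I would present the argument in this form, with the one careful check being that $S_i \subseteq T_i$ for all $i$ forces $E(\textbf{s}^{(p)}) \subseteq E(\textbf{t})$ at every stage so that every intermediate $k$-set genuinely satisfies $\textbf{s}^{(p)} \sqsubseteq \textbf{t}$, which is what licenses the use of orthant submodularity with base point $\textbf{s}$.
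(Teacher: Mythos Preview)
The paper does not supply its own proof of this lemma; it is simply quoted from \cite{Tang2021}. So there is no in-paper argument to compare against.

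Your proposal is correct and is precisely the standard telescoping-plus-orthant-submodularity proof. The meandering in the middle is unnecessary: once you use the hypothesis $\textbf{s}\sqsubseteq\textbf{t}$, you get $E(\textbf{s})\subseteq E(\textbf{t})$ immediately (since $S_i\subseteq T_i$ for every $i$), and then your chain $\textbf{s}^{(0)}=\textbf{s},\ \textbf{s}^{(p)}=\textbf{s}^{(p-1)}\sqcup(x_p,\textbf{t}(x_p))$ terminates at $\textbf{s}^{(m)}=\textbf{t}$ exactly, with no monotonicity needed. The telescoping identity
\[
g(\textbf{t})-g(\textbf{s})=\sum_{p=1}^{m}\Delta_{x_p,\textbf{t}(x_p)}g(\textbf{s}^{(p-1)})
\]
together with orthant submodularity (applied with $\textbf{s}\sqsubseteq\textbf{s}^{(p-1)}$ and $x_p\notin E(\textbf{s}^{(p-1)})$) then gives the lemma directly. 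For a clean write-up, state $\textbf{s}^{(m)}=\textbf{t}$ up front and drop the detour through $g(\textbf{t})\le g(\textbf{s}^{(m)})$.
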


Before presenting our algorithm for the $k$-submodular cover problem, we first explain the limitation of finding a feasible solution for Problem (\ref{Pro1}). The $k$-submodular cover problem has a connection with the unconstrained $k$-submodular maximization problem (U$k$-SM). Indeed, recently Crawford \cite{Crawford2023} proved that for every $\epsilon>0$ and for every $(\alpha,\beta)$-bicriteria approximation of the former we could find a $\beta-\epsilon$-approximation for U$k$-SM. Thus, the following holds.
\begin{theorem}\cite{Crawford2023}
For any $\epsilon>0$, a polynomial time $(\alpha,\beta)$-bicriteria algorithm for the $k$-SC can be converted to a $\beta$-approximation algorithm for U$k$-SM in polynomial time.
\end{theorem}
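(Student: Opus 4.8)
The plan is to argue directly from the claimed implication of Crawford~\cite{Crawford2023}, namely that an $(\alpha,\beta)$-bicriteria algorithm for $k$-SC yields a $(\beta-\epsilon)$-approximation for U$k$-SM for every $\epsilon>0$, and then to upgrade this to a clean $\beta$-approximation by a standard limiting/guessing argument. So the first step is to make precise the reduction in the other direction: given an instance of U$k$-SM with $k$-submodular objective $g$, we need to phrase it as a $k$-SC instance. The natural choice is to set the threshold $\tau = \opt_{\mathrm{U}k\mathrm{-SM}} = \max_{\bs\in(k+1)^{\mathcal X}} g(\bs)$ (or, since this value is unknown, to guess it up to a $(1+\epsilon')$ factor over the polynomially many candidates, exploiting that the optimum lies between $\max_{x,i}\Delta_{x,i}g(\b0)$ and $n$ times that quantity) and to use a uniform (or the given) cost function $w$. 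Feeding this instance to the bicriteria algorithm produces a $k$-set $S$ with $g(S)\ge\beta\tau = \beta\,\opt_{\mathrm{U}k\mathrm{-SM}}$; this $S$ is itself a feasible (unconstrained) solution for U$k$-SM, so its value certifies a $\beta$-approximation, up to the $\epsilon'$ loss incurred in guessing $\tau$.

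The second step is to remove the $\epsilon$. Here I would take $\epsilon\to 0$ along the family of reductions: for each $\epsilon>0$ the bicriteria algorithm gives a solution of value at least $(\beta-\epsilon)\opt_{\mathrm{U}k\mathrm{-SM}}$ in polynomial time (polynomial in $n$ and $1/\epsilon$), and since there is a fixed polynomial bound on the running time once $\epsilon$ is chosen as, say, $1/n$, the value $(\beta - 1/n)\opt$ is already a $\beta$-approximation in the usual asymptotic sense; alternatively, if one insists on an exact $\beta$ factor, one observes that the finitely many distinct outputs of a deterministic polynomial-time procedure on a fixed instance let us just take $\epsilon$ small enough that $(\beta-\epsilon)\opt$ rounds up past the best attainable value, or equivalently invoke that a $(\beta-\epsilon)$-approximation for all $\epsilon>0$ is by definition a $\beta$-approximation when $\beta$ is the target ratio. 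The monotone versus non-monotone distinction does not enter, since U$k$-SM is the unconstrained problem and $g$ is only assumed non-negative and normalized.

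The main obstacle I anticipate is the guessing of $\tau$: the theorem as stated wants an exact $\beta$ (not $\beta-\epsilon$) approximation, and since the true optimum $\opt_{\mathrm{U}k\mathrm{-SM}}$ is not known a priori we can only feed the bicriteria algorithm a threshold $\tau$ that is a lower bound on it. If $\tau < \opt_{\mathrm{U}k\mathrm{-SM}}$ strictly, the guarantee $g(S)\ge\beta\tau$ degrades relative to $\opt$. The fix is the geometric grid of guesses $\tau_j = (1+\epsilon')^j\cdot M$ with $M=\max_{x,i}\Delta_{x,i}g(\b0)$, running the bicriteria algorithm on each and returning the best $k$-set found; for the largest $j$ with $\tau_j\le\opt_{\mathrm{U}k\mathrm{-SM}}$ we get $g(S)\ge\beta\tau_j\ge\frac{\beta}{1+\epsilon'}\opt_{\mathrm{U}k\mathrm{-SM}}$, and absorbing $\epsilon'$ into the $\epsilon$ of the statement (or letting $\epsilon'\to0$ as above) closes the gap. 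One should also check that submodularity plus non-negativity indeed forces $\opt_{\mathrm{U}k\mathrm{-SM}}\in[M,\,nM]$, which follows from Lemma~\ref{lem1} applied with $\bs=\b0$ together with orthant submodularity; this is the only quantitative fact needed and it is routine.
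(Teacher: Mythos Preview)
The paper does not give its own proof of this theorem; it is quoted with attribution to Crawford~\cite{Crawford2023} and then used only in the subsequent paragraph to derive upper bounds on the achievable $\beta$. Your reduction---set (or geometrically guess) $\tau$ equal to the optimum of U$k$-SM, run the $(\alpha,\beta)$-bicriteria algorithm on the resulting $k$-SC instance, and return the output $k$-set as an unconstrained solution of value at least $\beta\tau$---is the standard argument and is sound; the range $\opt\in[M,nM]$ you need for the guessing grid follows exactly as you say from Lemma~\ref{lem1} applied with $\bs=\b0$. Note also that the paper's own phrasing is slightly loose: the sentence preceding the theorem attributes to Crawford only a $(\beta-\epsilon)$-approximation, and the dangling ``For any $\epsilon>0$'' in the theorem header is a vestige of that, so your care in absorbing the $\epsilon$ is appropriate even though the paper itself does not carry out (or need) that step.
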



If $g$ is monotone, let $\beta\geq \frac{k+1}{2k}+\epsilon$, then $\mathbb{A}$ could be converted to a ($\frac{k+1}{2k}+\epsilon$)-approximation algorithm for the U$k$-SM problem, while Iwata et al. \cite{Iwata2016} proved that a $(\frac{k+1}{2k}+\epsilon)$-approximation algorithm for U$k$-SM problem would requires an exponential number of queries. Therefore $\beta\leq \frac{k+1}{2k}$ since $\epsilon$ can be arbitrarily small. If $g$ is non-monotone, then by setting $\beta\geq \frac{1}{3}+\epsilon$, we can get $g(\textbf{s})\geq (\frac{1}{3}+\epsilon)OPT$, but  Ward and \v{Z}ivn\'{y} \cite{Ward2016} gave a deterministic greedy algorithm for non-monotone U$k$-SM, achieving $\frac{1}{3}$-approximation ratio. They also gave a tight instance that could achieve $\frac{1}{3}$-approximation ratio, therefore $\beta\leq \frac{1}{3}$ for any instances since our algorithm is based on deterministic greedy strategy.


\section{Streaming Algorithm with Known Guess of Optimal Value}

In this section, we present our greedy threshold streaming algorithm and its theoretical analysis. This algorithm is Algorithm \ref{AL1}.
%

\subsection{The streaming algorithm}

\floatname{algorithm}{Algorithm}
\renewcommand{\algorithmicrequire}{\textbf{Input:}}
\renewcommand{\algorithmicensure}{\textbf{Output:}}

\begin{algorithm}[htb]
\textbf{Input:} $\tau$, $\overline{w(\textbf{v})}$ and $\epsilon\in (0,1]$
\caption{\textbf{}}
\label{AL1}
        \begin{algorithmic}[1]
            \STATE Set $\textbf{x}:=\textbf{0}$, $\theta:=\epsilon\tau/\overline{w(\textbf{v})}$
            \STATE Set $A:=\frac{3-\epsilon}{2\epsilon}\overline{w(\textbf{v})}$ if $g$ is monotone; otherwise, $A:=\frac{4-\epsilon}{3\epsilon}\overline{w(\textbf{v})}$
            \FOR {$x\in \mathcal{X}$}
                    \STATE $i':=\arg\max_{i\in[k]} g((x,i))$
                    \IF {$w(x)\leq A$ and $g((x,i'))\geq \tau$}\label{alLine1}
                        \STATE $\textbf{x} := (x, i')$
                    \ELSE
                        \STATE $i':=\arg\max_{i\in[k]} \Delta_{x,i}g(\textbf{x})$
                        \IF {$\frac{\Delta_{x,i'}g(\textbf{x})}{w(x)}\geq \theta$ and $w(\textbf{x})+w(x)\leq A$}\label{alLine2}
                            \STATE $\textbf{x} := \textbf{x}\sqcup (x,i')$
                        \ENDIF
                    \ENDIF
            \ENDFOR
            \RETURN {$\textbf{x}$}
        \end{algorithmic}
\end{algorithm}

Let $\textbf{v}$ be an optimal solution of the Problem (\ref{Pro1}), that is,
\[
w(\textbf{v})=\arg\min\{w(\textbf{s})|g(\textbf{s})\geq \tau, \textbf{s}\in (k+1)^\mathcal{X}\}.
\]

Assume that a guessed value $\overline{w(\textbf{v})}$ is known and satisfies $w(\textbf{v})\leq \overline{w(\textbf{v})}$.
Algorithm \ref{AL1} takes the guessed value $\overline{w(\textbf{v})}$, $\epsilon\in (0,1]$ and $r$ as inputs. At the beginning of Algorithm \ref{AL1}, it initialises an empty $k$-set $\textbf{x}$ which pairs will be added in the algorithm process.
We denote $\theta$ as the threshold value and $A$ as the upper bound cost of $\textbf{x}$ which means at the end of Algorithm \ref{AL1}, $w(\textbf{x})$ is at most $A$.
We use $\theta=\epsilon\tau/\overline{w(\textbf{v})}$, $A=\frac{3-\epsilon}{2\epsilon}\overline{w(\textbf{v})}$ and $\frac{4-\epsilon}{3\epsilon}\overline{w(\textbf{v})}$ for monotone and non-monotone cases, respectively.
An element $x$ is called $big$ if it satisfies $w(x)\leq A$ and $\max_{i\in[k]}g((x,i))\geq \tau$. When a new element arrives, Algorithm \ref{AL1} first checks whether $x$ is a $big$ element or not. If $x$ is $big$, then the algorithm sets $\textbf{x}=(x,i')$ where $i'=\arg\max_{i\in[k]}g((x,i))$. If it is not $big$, the algorithm finds a position $i'\in [k]$ that maximizes $\Delta_{x,i}g((x,i))$ and inserts $(x, i')$ to the current $k$-set $\textbf{x}$ if $\frac{\Delta_{x,i'}g(\textbf{x})}{w(x)}\geq \theta$ and $w(\textbf{x})+w(x)\leq A$ are satisfied. After all elements of $\mathcal{X}$ are scanned over,
Algorithm \ref{AL1} returns $\textbf{x}$ as the final solution.

\subsection{Analysis of Algorithm \ref{AL1}}

In the following, we give the performance guarantee of Algorithm \ref{AL1} for both monotone and non-monotone cases.
\begin{lemma}\label{lem2}
\rm Suppose that the output solution $\textbf{x}$ of Algorithm \ref{AL1}  contains a $big$ element $x_b$. If $g$ is monotone, then $w(\textbf{x})\leq \frac{3-\epsilon}{2\epsilon}\overline{w(\textbf{v})}$ and $g(\textbf{x})\geq \tau$. If $g$ is non-monotone, then $w(\textbf{x})\leq \frac{4-\epsilon}{3\epsilon}\overline{w(\textbf{v})}$ and $g(\textbf{x})\geq \tau$.
\end{lemma}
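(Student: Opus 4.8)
The plan is to analyze what happens when the output $\textbf{x}$ ever gets set to a single $big$ element $x_b$, i.e. the branch at line \ref{alLine1} is triggered at least once. The key observation is that once $\textbf{x}$ is set to $(x_b, i')$ with $g((x_b,i'))\geq\tau$, the utility constraint $g(\textbf{x})\geq\tau$ is already satisfied at that moment. So first I would argue the lower bound on utility: after $x_b$ is inserted, any subsequent modification of $\textbf{x}$ at line \ref{alLine2} only adds pairs $(x,i')$ via the join operation $\sqcup$, and I need $g$ to be monotone (or use pairwise monotonicity in the non-monotone case) to ensure these additions do not decrease $g(\textbf{x})$ below $\tau$. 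Actually a subtlety: a later element could again be $big$ and \emph{reset} $\textbf{x}:=(x,i')$, discarding everything; but that new element also has $g$-value $\geq\tau$, so by induction on the stream the final $\textbf{x}$ always has $g(\textbf{x})\geq\tau$. For the monotone case this is immediate from monotonicity; for the non-monotone case I expect the argument to use that $g((x_b,i'))\geq\tau$ together with pairwise monotonicity / orthant submodularity to show the subsequently built-up $k$-set still has value $\geq\tau$ — this is the first place I would be careful.

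Second I would establish the cost bound $w(\textbf{x})\leq A$ where $A=\frac{3-\epsilon}{2\epsilon}\overline{w(\textbf{v})}$ (monotone) or $\frac{4-\epsilon}{3\epsilon}\overline{w(\textbf{v})}$ (non-monotone). This should be essentially by construction: every time a pair is added at line \ref{alLine1} we have the guard $w(x)\leq A$, so $w(\textbf{x})=w(x)\leq A$ right after a reset; every time a pair is added at line \ref{alLine2} we have the explicit guard $w(\textbf{x})+w(x)\leq A$, so the invariant $w(\textbf{x})\leq A$ is maintained throughout the execution. Hence at termination $w(\textbf{x})\leq A$, which is exactly the claimed bound for each case. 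This part is routine bookkeeping — just verifying the loop invariant is preserved by both insertion branches and is trivially true at initialization ($\textbf{x}=\textbf{0}$, $w(\textbf{x})=0\leq A$).

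Combining the two, whenever the output contains a $big$ element we get both $g(\textbf{x})\geq\tau$ and $w(\textbf{x})\leq A$, with $A$ instantiated to the two stated values according to monotonicity, which is exactly the statement of Lemma \ref{lem2}. The main obstacle I anticipate is the utility lower bound in the non-monotone case: one must rule out that the join operations performed after $x_b$ is placed could erode $g(\textbf{x})$ below $\tau$. I would handle this by noting that the elements added at line \ref{alLine2} all pass the test $\Delta_{x,i'}g(\textbf{x})/w(x)\geq\theta>0$ at the time of insertion, so each such addition has strictly positive marginal gain and hence only increases $g(\textbf{x})$; thus from the moment $g(\textbf{x})\geq\tau$ first holds, it continues to hold until $\textbf{x}$ is possibly reset by another $big$ element (which again restores $g(\textbf{x})\geq\tau$). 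This makes the argument uniform across the monotone and non-monotone cases and avoids needing monotonicity of $g$ at all for this lemma — the only case-dependence is the numerical value of $A$.
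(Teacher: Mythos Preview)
Your proposal is correct and tracks the paper's proof closely on the cost side: both of you simply observe that the guards at lines~\ref{alLine1} and~\ref{alLine2} enforce the invariant $w(\textbf{x})\le A$, so the claimed bound on $w(\textbf{x})$ holds by construction.

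For the utility bound $g(\textbf{x})\ge\tau$, you end up with a slightly cleaner argument than the paper. The paper splits into cases: in the monotone case it uses $(x_b,i')\sqsubseteq\textbf{x}$ and monotonicity directly; in the non-monotone case it appeals to pairwise monotonicity to conclude that the greedy position $i'=\arg\max_i\Delta_{x,i}g(\textbf{x})$ always has non-negative marginal (since at most one coordinate can be negative), hence $g(\textbf{x})$ is non-decreasing along the run. Your observation that any pair inserted at line~\ref{alLine2} already satisfies $\Delta_{x,i'}g(\textbf{x})\ge\theta\,w(x)>0$ gives the same non-decreasing conclusion without invoking either monotonicity or pairwise monotonicity, and it handles both cases uniformly. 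Your treatment of a later \emph{big} element resetting $\textbf{x}$ (and re-establishing $g(\textbf{x})\ge\tau$) is also correct and is left implicit in the paper. So: same overall strategy, but your version is a touch more economical on the $k$-submodular structural lemmas.
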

\begin{proof}
Assume that $g$ is monotone. Since $E(\textbf{x})$ contains a $big$ element $x_b$, then from Line \ref{alLine1} of Algorithm \ref{AL1}, we have
\[
w(x_b)\leq A \text{ and } \max_{i\in[k]}g((x_b, i)) \geq \tau.
\]
That is, the single element satisfies the cover constraint. Moreover, $w(\textbf{x})\leq A$ also holds according to the condition in Line \ref{alLine2}. Then we have
\begin{equation}\label{eqsup}
w(x_b)\leq w(\textbf{x}) \leq A = \frac{3-\epsilon}{2\epsilon}\overline{w(\textbf{v})}.
\end{equation}
The inequality holds due to the fact that $x_b\in E(\textbf{x})$.
And by monotonicity of $g$ we obtain
\[
g(\textbf{x})\geq \max_{i\in[k]}g((x_b,i))\geq \tau.
\]
For the non-monotone case, since the $k$-submodular function has pairwise monotonicity property, $\Delta_{x,i}g(\textbf{s})+\Delta_{x,j}g(\textbf{s})\geq 0$ for any $\textbf{s}\in (k+1)^\mathcal{X}$ and $x\notin E(\textbf{s})$ with $i\neq j$,  there is at most one position $i$ such that $\Delta_{x,i}g(\textbf{s})<0$.
By the greedy rule of Algorithm \ref{AL1}, $g(\textbf{x})$ is non-decreasing during the algorithm process. Hence, we obtain
\[
w(\textbf{x}) \leq \frac{4-\epsilon}{3\epsilon}\overline{w(\textbf{v})} \text{ and }g(\textbf{x})\geq \tau.
\]
\end{proof}

Therefore, we now assume that $\mathcal{X}$ does not contain any $big$ element. Since the upper bound of $w(\textbf{x})$ is given in the algorithm, we only need to analyze the value of $g(\textbf{x})$. First, we introduce additional notation as follows.
\begin{itemize}
    \item Let $|E(\textbf{x})|=m$.
    \item Let $(x_p,i_p)$ be the $p$-th pair added to $\textbf{x}$ in the algorithm where $p=1,\dots,m$.
    \item Let $\textbf{x}^p=\{(x_1,i_1), (x_2, i_2),\dots, (x_p, i_p)\}$, i.e., the first $p$ pairs added to $\textbf{x}$. Specifically, let $\textbf{x}^0=\textbf{0}$ and $\textbf{x}^m=\textbf{x}$.
    \item $\textbf{v}^p=(\textbf{v}\sqcup \textbf{x}^p)\sqcup \textbf{x}^p$ where $p=0,1,\dots,m$.
    \item $\textbf{v}^{p-\frac{1}{2}}=(\textbf{v}\sqcup \textbf{x}^p)\sqcup \textbf{x}^{p-1}$ where $p=1,\dots,m$.
    \item For any $x\in \mathcal{X}$, let $\textbf{x}^{m_x}$ denote $\textbf{x}$ when $x$ is considered in the algorithm.
\end{itemize}
By the definitions of $\{\textbf{x}^p\}_{p=0}^m$, $\{\textbf{v}^p\}_{p=0}^m$ and $\{\textbf{v}^{p-\frac{1}{2}}\}_{p=1}^m$, it is easy to verify that $\textbf{x}^{p-1} \sqsubseteq \textbf{v}^{p-\frac{1}{2}}$, $\textbf{v}^{p-\frac{1}{2}}\sqsubseteq \textbf{v}^p$, $\textbf{x}^p \sqsubseteq \textbf{v}^p$ and $\textbf{x}^{m_x}\sqsubseteq \textbf{x}$.

\begin{lemma}\label{lem3}\cite{Pham2022a}
\rm Given an optimal solution $\textbf{v}$ for Problem (\ref{Pro1}), we have that
\begin{enumerate}
  \item[(a)] if $g$ is monotone, then
  \begin{equation}
  g(\textbf{v})-g(\textbf{v}^m)\leq g(\textbf{x}).
  \end{equation}
  \item[(b)] if $g$ is non-monotone, then
  \begin{equation}
  g(\textbf{v})-g(\textbf{v}^m)\leq 2g(\textbf{x}).
  \end{equation}
\end{enumerate}
\end{lemma}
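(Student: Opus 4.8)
The plan is to set up a telescoping sum that bounds $g(\textbf{v}) - g(\textbf{v}^m)$ by comparing $g(\textbf{v}^{p-1})$ with $g(\textbf{v}^p)$ step by step, tracking how much utility is ``lost'' when we incorporate the $p$-th greedily-chosen pair $(x_p,i_p)$ into the comparison solution. Concretely, I would write
\[
g(\textbf{v}) - g(\textbf{v}^m) = g(\textbf{v}^0) - g(\textbf{v}^m) = \sum_{p=1}^{m}\bigl(g(\textbf{v}^{p-1}) - g(\textbf{v}^p)\bigr),
\]
and then split each term through the intermediate state $\textbf{v}^{p-\frac12}$ as
\[
g(\textbf{v}^{p-1}) - g(\textbf{v}^p) = \bigl(g(\textbf{v}^{p-1}) - g(\textbf{v}^{p-\frac12})\bigr) + \bigl(g(\textbf{v}^{p-\frac12}) - g(\textbf{v}^p)\bigr).
\]
The first difference only concerns changing the ``frame'' from $\textbf{x}^{p-1}$ to $\textbf{x}^p$ while keeping $\textbf{v}\sqcup\textbf{x}^p$ fixed; the second concerns moving the coordinate of $x_p$ in the optimal-ish solution from its $\textbf{v}$-position to position $i_p$.

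For the second difference, I would use orthant submodularity together with the greedy maximality of the choice $i_p = \arg\max_i \Delta_{x_p,i}g(\textbf{x}^{p-1})$: the change $g(\textbf{v}^{p-\frac12}) - g(\textbf{v}^p)$ is at most the marginal gain of putting $x_p$ at its $\textbf{v}$-coordinate into a superset of $\textbf{x}^{p-1}$, which by orthant submodularity is at most $\Delta_{x_p,\textbf{v}(x_p)}g(\textbf{x}^{p-1}) \le \Delta_{x_p,i_p}g(\textbf{x}^{p-1})$, i.e. bounded by the actual increment $g(\textbf{x}^p) - g(\textbf{x}^{p-1})$. Summing these telescopes to $g(\textbf{x}^m) - g(\textbf{x}^0) = g(\textbf{x})$. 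For the first difference $g(\textbf{v}^{p-1}) - g(\textbf{v}^{p-\frac12})$, in the monotone case it is $\le 0$ since $\textbf{x}^{p-1}\sqsubseteq\textbf{x}^p$ forces $\textbf{v}^{p-1}\sqsubseteq\textbf{v}^{p-\frac12}$ pointwise (need to check the meet/join bookkeeping), so it contributes nothing and we get bound $g(\textbf{x})$. In the non-monotone case monotonicity fails, so instead I would bound this term using pairwise monotonicity — at most one coordinate of $x_p$ can have negative marginal, and the ``cost'' of the frame change for $x_p$ is controlled by $\Delta_{x_p,i_p}g(\cdot)$ as well, yielding a second copy of $g(\textbf{x})$ and hence the factor $2$.

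The main obstacle I anticipate is the careful combinatorial bookkeeping of the meet/join operations: verifying precisely which coordinates of $\textbf{v}^{p-1}$, $\textbf{v}^{p-\frac12}$, $\textbf{v}^p$ differ, confirming the claimed containments $\textbf{x}^{p-1}\sqsubseteq\textbf{v}^{p-\frac12}\sqsubseteq\textbf{v}^p$, and ensuring that at each step exactly one element ($x_p$, when $x_p \in E(\textbf{v})$) changes position so that orthant submodularity and pairwise monotonicity apply to a single coordinate at a time. Elements of $E(\textbf{x})\setminus E(\textbf{v})$ and elements of $E(\textbf{v})\setminus E(\textbf{x})$ must be handled separately in this accounting. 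Since this lemma is quoted verbatim from \cite{Pham2022a}, I would follow their argument, adapting notation, rather than reprove the $k$-submodular maximization estimate from scratch.
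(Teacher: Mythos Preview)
The paper does not actually prove Lemma~\ref{lem3}; it is quoted without proof from \cite{Pham2022a}. Your telescoping strategy via $\textbf{v}^{p-\frac12}$ is exactly the standard argument used there, so the overall route is correct.

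However, you have the roles of the two differences swapped, and the specific containment you assert is false. When $x_p\in E(\textbf{v})$ with $\textbf{v}(x_p)=j\neq i_p$, the join $\textbf{v}\sqcup\textbf{x}^p$ \emph{removes} $x_p$ (conflict), so $\textbf{v}^{p-\frac12}=\textbf{v}^{p-1}\setminus(x_p,j)$; hence $\textbf{v}^{p-\frac12}\sqsubseteq\textbf{v}^{p-1}$, not the reverse. The correct accounting is:
\[
g(\textbf{v}^{p-1})-g(\textbf{v}^{p-\frac12})=\Delta_{x_p,j}g(\textbf{v}^{p-\frac12})\le\Delta_{x_p,j}g(\textbf{x}^{p-1})\le\Delta_{x_p,i_p}g(\textbf{x}^{p-1}),
\]
using $\textbf{x}^{p-1}\sqsubseteq\textbf{v}^{p-\frac12}$, orthant submodularity, and the greedy choice; and
\[
g(\textbf{v}^{p-\frac12})-g(\textbf{v}^p)=-\Delta_{x_p,i_p}g(\textbf{v}^{p-\frac12}),
\]
which is $\le 0$ by monotonicity in case~(a), and in case~(b) is bounded via pairwise monotonicity by $\Delta_{x_p,j'}g(\textbf{v}^{p-\frac12})\le\Delta_{x_p,i_p}g(\textbf{x}^{p-1})$ for any $j'\neq i_p$. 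Summing then gives $g(\textbf{x})$ and $2g(\textbf{x})$ respectively. You already flagged the meet/join bookkeeping as the delicate point; once you straighten out which containment goes which way, your sketch becomes a complete proof.
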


\begin{lemma}\label{lem4}
\rm If $\mathcal{X}$ does not contain any $big$ element, then $g(\textbf{x})\geq \frac{(1-\epsilon)\tau}{2}$ if $g$ is monotone and $g(\textbf{x})\geq \frac{(1-\epsilon)\tau}{3}$ if $g$ is non-monotone.
\end{lemma}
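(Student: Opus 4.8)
The plan is to bound $g(\mathbf{x})$ from below by showing that the final $k$-set $\mathbf{x}$ either has large weight — forcing $\mathbf{x}$ to already be "close to optimal" in utility — or else there is a budget slack argument that lets every element of the optimal solution $\mathbf{v}$ be rejected only because its marginal ratio fell below the threshold $\theta$. I would organise the proof around the telescoping identity $g(\mathbf{x}) = \sum_{p=1}^{m}\bigl(g(\mathbf{x}^p)-g(\mathbf{x}^{p-1})\bigr) = \sum_{p=1}^{m}\Delta_{x_p,i_p}g(\mathbf{x}^{p-1})$, combined with the acceptance rule $\Delta_{x_p,i_p}g(\mathbf{x}^{p-1}) \ge \theta\, w(x_p)$, which gives $g(\mathbf{x}) \ge \theta \sum_{p=1}^m w(x_p) = \theta\, w(\mathbf{x})$.

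First I would split into two cases according to whether the algorithm terminates with $w(\mathbf{x})$ "full" or not. Case 1: $w(\mathbf{x}) > A - \min_{x}w(x)$, i.e. the budget is essentially exhausted. Actually the cleaner split is: either $w(\mathbf{x}) \ge \overline{w(\mathbf{v})}/\epsilon$ (in the monotone case — the exact constant to be read off from $A$ and $\theta$), in which case $g(\mathbf{x}) \ge \theta\, w(\mathbf{x}) \ge (\epsilon\tau/\overline{w(\mathbf{v})})\cdot(\overline{w(\mathbf{v})}/\epsilon) = \tau \ge \frac{(1-\epsilon)\tau}{2}$ and we are done; or $w(\mathbf{x})$ is small enough that adding any single element $x$ of the optimal solution would still respect $w(\mathbf{x})+w(x)\le A$ (using $w(\mathbf{v}) \le \overline{w(\mathbf{v})}$, so each $x\in E(\mathbf{v})$ has $w(x)\le \overline{w(\mathbf{v})}$, and the gap $A-w(\mathbf{x})$ is at least $\overline{w(\mathbf{v})}$). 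In the latter case, every $x \in E(\mathbf{v})$ that is not in $E(\mathbf{x})$ was rejected on Line~\ref{alLine2} purely because its marginal ratio was below $\theta$ at the time it was seen, i.e. $\Delta_{x,i}g(\mathbf{x}^{m_x}) < \theta\, w(x)$ for its best position $i$, and hence for $\mathbf{v}(x)$ by the choice of $i'$; and since $g$ is not uniform-cost, I should also handle $x\in E(\mathbf{v})\cap E(\mathbf{x})$ where $\mathbf{x}(x)\ne \mathbf{v}(x)$ — orthant submodularity and pairwise monotonicity handle the sign issues there, exactly as in the cited Lemma~\ref{lem3} / Pham et al.~\cite{Pham2022a} argument.

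Next, using Lemma~\ref{lem1} applied to $\mathbf{x}^m \sqsubseteq \mathbf{v}^m$ (recall $\mathbf{x}^m = \mathbf{x} \sqsubseteq \mathbf{v}^m$ by the verified containments), I would write
\[
g(\mathbf{v}^m) \le g(\mathbf{x}) + \sum_{x \in E(\mathbf{v}^m)\setminus E(\mathbf{x})} \Delta_{x,\mathbf{v}^m(x)}g(\mathbf{x}),
\]
then bound each marginal by $\Delta_{x,\mathbf{v}^m(x)}g(\mathbf{x}) \le \Delta_{x,\mathbf{v}(x)}g(\mathbf{x}^{m_x})$ via orthant submodularity (since $\mathbf{x}^{m_x} \sqsubseteq \mathbf{x}$), which is $< \theta\, w(x) \le \theta\, w(\mathbf{v})$-summable, giving $\sum_{x} \Delta_{x,\mathbf{v}^m(x)}g(\mathbf{x}) \le \theta\, w(\mathbf{v}) \le \theta\, \overline{w(\mathbf{v})} = \epsilon\tau$. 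Hence $g(\mathbf{v}^m) \le g(\mathbf{x}) + \epsilon\tau$. Finally I invoke Lemma~\ref{lem3}: in the monotone case $g(\mathbf{v}) - g(\mathbf{v}^m) \le g(\mathbf{x})$, so $\tau \le g(\mathbf{v}) \le g(\mathbf{v}^m) + g(\mathbf{x}) \le 2g(\mathbf{x}) + \epsilon\tau$, yielding $g(\mathbf{x}) \ge \frac{(1-\epsilon)\tau}{2}$; in the non-monotone case $g(\mathbf{v}) - g(\mathbf{v}^m) \le 2g(\mathbf{x})$, so $\tau \le 3g(\mathbf{x}) + \epsilon\tau$, yielding $g(\mathbf{x}) \ge \frac{(1-\epsilon)\tau}{3}$.

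The main obstacle I anticipate is the bookkeeping in the "budget not exhausted" case — specifically, making precise that when $\mathcal{X}$ contains no $big$ element and the loop ends, every optimal element was genuinely rejected by the threshold test rather than by the budget test, which requires a careful choice of where to put the dividing line between "$w(\mathbf{x})$ large" and "$w(\mathbf{x})$ small" and checking it against the values $A = \frac{3-\epsilon}{2\epsilon}\overline{w(\mathbf{v})}$ (resp. $\frac{4-\epsilon}{3\epsilon}\overline{w(\mathbf{v})}$) and $\theta = \epsilon\tau/\overline{w(\mathbf{v})}$. A secondary subtlety is that marginal gains can be negative in the non-monotone case, so the telescoping lower bound $g(\mathbf{x})\ge \theta\, w(\mathbf{x})$ is fine (all accepted steps have nonnegative ratio $\ge\theta$), but the application of Lemma~\ref{lem1} and orthant submodularity to the $\mathbf{v}^m$-vs-$\mathbf{x}$ comparison must be done with the signed marginals, relying on pairwise monotonicity exactly as recorded in the preliminaries.
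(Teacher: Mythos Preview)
Your overall strategy is sound and matches the paper's in its ingredients (the telescoping lower bound $g(\textbf{x})\ge\theta\,w(\textbf{x})$, Lemma~\ref{lem1} applied to $\textbf{x}\sqsubseteq\textbf{v}^m$, and Lemma~\ref{lem3}), but your case split differs from the paper's and your tentative dividing line is off. You split on the final value of $w(\textbf{x})$; the paper instead splits on whether some element of $E(\textbf{v})\setminus E(\textbf{x})$ was rejected by the budget test (a \emph{bad} element) or all were rejected by the threshold test (\emph{good} elements). In the bad-element case the paper argues that for such an $x$ one has $w(\textbf{x}^{m_x})+w(x)>A$, so telescoping gives $g(\textbf{x}^{m_x}\sqcup(x,i'))\ge\theta A$; then subadditivity yields $\theta A\le g(\textbf{x})+\max_i g((x,i))$, and the no-\emph{big}-element hypothesis forces $\max_i g((x,i))<\tau$, giving $g(\textbf{x})\ge\theta A-\tau=\frac{(1-\epsilon)\tau}{2}$. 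This is where the lemma's hypothesis is actually consumed in the paper.

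Your route avoids that step, which is nice, but the specific threshold $c=\overline{w(\textbf{v})}/\epsilon$ you propose does \emph{not} make the two cases exhaustive for all $\epsilon\in(0,1]$: Case~2 requires $w(\textbf{x})\le A-\overline{w(\textbf{v})}=\frac{3(1-\epsilon)}{2\epsilon}\overline{w(\textbf{v})}$, and $\overline{w(\textbf{v})}/\epsilon\le\frac{3(1-\epsilon)}{2\epsilon}\overline{w(\textbf{v})}$ only when $\epsilon\le 1/3$. The fix is simply to take $c=A-\overline{w(\textbf{v})}$ as the dividing line: then Case~1 gives $g(\textbf{x})\ge\theta(A-\overline{w(\textbf{v})})=\theta A-\epsilon\tau=\frac{3(1-\epsilon)\tau}{2}$ (monotone) or $\frac{4(1-\epsilon)\tau}{3}$ (non-monotone), both comfortably above the target, and Case~2 goes through exactly as you describe since $w(\textbf{x}^{m_x})+w(x)\le w(\textbf{x})+\overline{w(\textbf{v})}\le A$ for every $x\in E(\textbf{v})$. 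With that adjustment your argument is complete; note that in your version the no-\emph{big}-element hypothesis is used only implicitly, to guarantee that Line~\ref{alLine1} never resets $\textbf{x}$ and hence the telescoping identity is valid.
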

\begin{proof}
Suppose that $g$ is monotone. By orthant submodularity of $g$, Lemma \ref{lem1} and Lemma \ref{lem3}, we have
\begin{equation}\label{eq3}
\begin{split}
\tau - g(\textbf{x}) &\leq g(\textbf{v})- g(\textbf{x})\\
                       &= g(\textbf{v})-g(\textbf{v}^m)+g(\textbf{v}^m)-g(\textbf{x})\\
                       &\leq g(\textbf{x})+\sum_{x\in E(\textbf{v})\backslash E(\textbf{x})}\Delta_{x, \textbf{v}(x)}g(\textbf{x}).
\end{split}
\end{equation}

For any element $x\in E(\textbf{v})\backslash E(\textbf{x})$, since it is not inserted into the $k$-set $\textbf{x}$ in the algorithm process, it does not pass the condition in Line \ref{alLine2}. An element $x\in E(\textbf{v})\backslash E(\textbf{x})$ is called $good$, if it satisfies $\max_{i\in[k]}\frac{\Delta_{x,i}g(\textbf{x}^{m_x})}{w(x)}< \theta$. Otherwise, an element $x\in E(\textbf{v})\backslash E(\textbf{x})$ is $bad$, if it satisfies
\[
\max_{i\in[k]}\frac{\Delta_{x,i}g(\textbf{x}^{m_x})}{w(x)}\geq \theta \text{ and } w(\textbf{x}^{m_x})+w(x)> A  .
\]

We prove this lemma in two cases. First if the set $E(\textbf{v})\backslash E(\textbf{x})$ only contains $good$ elements, then we obtain
\begin{equation}\label{eq4}
\begin{split}
\sum\limits_{x\in E(\textbf{v})\backslash E(\textbf{x})}\Delta_{x,\textbf{v}(x)}g(\textbf{x}) &\leq \sum\limits_{x\in E(\textbf{v})\backslash E(\textbf{x})}\Delta_{x,\textbf{v}(x)}g(\textbf{x}^{m_x})\\
    & = \sum\limits_{x\in E(\textbf{v})\backslash E(\textbf{x})}\frac{\Delta_{x,\textbf{v}(x)}g(\textbf{x}^{m_x})}{w(x)}w(x)\\
    & \leq \sum\limits_{x\in E(\textbf{v})\backslash E(\textbf{x})}\max_{i\in[k]}\frac{\Delta_{x,i}g(\textbf{x}^{m_x})}{w(x)}w(x)\\
    & \leq \sum\limits_{x\in E(\textbf{v})\backslash E(\textbf{x})}\theta w(x)\\
    & \leq \theta w(\textbf{v}).
\end{split}
\end{equation}
The first inequality holds from orthant submodularity of $g$; and the third inequality is true because  $x$ is $good$ element. Combing the inequalities (\ref{eq3}) and (\ref{eq4}) and $w(\textbf{v})\leq \overline{w(\textbf{v})}$, we derive
\begin{equation}\label{eq5}
g(\textbf{x})\geq \frac{\tau-\theta w(\textbf{v})}{2} \geq \frac{(1-\epsilon)\tau}{2}.
\end{equation}

Next, we consider the case that there exist $bad$ elements in $E(\textbf{v})\backslash E(\textbf{x})$. Suppose that $x$ is a $bad$ element and $i'=\arg\max_{i\in[k]}\Delta_{x,i}g(\textbf{x}^{m_x})$. Then we have
\begin{align*}
g(\textbf{x}^{m_x}\sqcup (x,i')) &= g(\textbf{x}^{m_x})+ g(\textbf{x}^{m_x}\sqcup (x,i'))  -g(\textbf{x}^{m_x})\\                             &=\sum\limits_{p=1}^{m_x}[g(\textbf{x}^p)-g(\textbf{x}^{p-1})]+g(\textbf{x}^{m_x}\sqcup (x,i'))-g(\textbf{x}^{m_x})\\
                              &=\sum\limits_{p=1}^{m_x}\frac{\Delta_{x_p,i_p}g(\textbf{x}^{p-1})}{w(x_p)}w(x_p)+
                              \frac{\Delta_{x,i'}g(\textbf{x}^{m_x})}{w(x)}w(x)\\
                              &\geq \sum\limits_{p=1}^{m_x}\theta w(x_p)+\theta w(x)\\
                              &= \theta w(\textbf{x}^{m_x})+\theta w(x)\\
                              &= \theta w(\textbf{x}^{m_x}\sqcup (x,i')) \\
                              &\geq \theta A.
\end{align*}

The first inequality holds due to monotonicity of $g$; the second inequality follows from the fact that when every pair $(x_p,i_p)$ that is inserted into the current $k$-set $\textbf{x}$ should pass the condition in Line \ref{alLine2} of the algorithm and the element $x$ is $bad$. Therefore, we obtain
\begin{equation}\label{eq6}
\theta A\leq g(\textbf{x}^{m_x}\sqcup (x,i')) \leq g(\textbf{x}^{m_x}) + g((x,i'))\leq g(\textbf{x})+\max_{i\in[k]}g((x,i))
\end{equation}


We claim that $\max_{i\in[k]}g((x,i))<\tau$. 
Otherwise, if $\max_{i\in[k]}g((x,i))\geq\tau$, since $x\in E(\textbf{v})\backslash E(\textbf{x})$ we have $w(x)\leq w(\textbf{v})\leq A$, meaning that $x$ is a $big$ element which contradicts that $\mathcal{X}$ contain no $big$ element. Therefore, $\max_{i\in[k]}g((x,i))<\tau$ must hold.
%
Combing it with the inequality (\ref{eq6}) and the definitions of $\theta$ and $A$, we have
\[
g(\textbf{x})\geq \theta A-\tau =\frac{(1-\epsilon)\tau}{2}.
\]
Combing the two cases, we derive that $g(\textbf{x})\geq \frac{(1-\epsilon)\tau}{2}$ if $g$ is monotone. For the non-monotone case, analogously, we have $g(\textbf{x})\geq \frac{(1-\epsilon)\tau}{3}$.
\end{proof}
\qed
\begin{theorem}\label{thm1}
\rm If $g$ is a monotone $k$-submodular function, Algorithm \ref{AL1} returns a solution $\textbf{x}$ such that $w(\textbf{x})\leq \frac{3-\epsilon}{2\epsilon}\overline{w(\textbf{v})}$ and $g(\textbf{x})\geq \frac{(1-\epsilon)\tau}{2}$. If $g$ is a non-monotone $k$-submodular function, then $w(\textbf{x})\leq \frac{4-\epsilon}{3\epsilon}\overline{w(\textbf{v})}$ and $g(\textbf{x})\geq \frac{(1-\epsilon)\tau}{3}$.
\end{theorem}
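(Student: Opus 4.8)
The plan is to combine the two cases that have already been analyzed. The theorem is essentially a packaging of Lemma~\ref{lem2} and Lemma~\ref{lem4}: these two lemmas together cover all possibilities depending on whether the output $\textbf{x}$ contains a $big$ element or the ground set $\mathcal{X}$ contains no $big$ element at all. First I would observe that the cost bound $w(\textbf{x})\le A$ holds \emph{unconditionally}, directly from the algorithm: every time a pair $(x,i')$ is appended in Line~\ref{alLine2} (or $\textbf{x}$ is reset to a single $big$ element in Line~\ref{alLine1}), the update is guarded by the condition $w(\textbf{x})+w(x)\le A$ (respectively $w(x)\le A$), so the invariant $w(\textbf{x})\le A$ is preserved throughout. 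Plugging in $A=\frac{3-\epsilon}{2\epsilon}\overline{w(\textbf{v})}$ in the monotone case and $A=\frac{4-\epsilon}{3\epsilon}\overline{w(\textbf{v})}$ in the non-monotone case gives the stated cost guarantee immediately.

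For the utility bound I would split into the two exhaustive scenarios. If $\textbf{x}$ contains a $big$ element, Lemma~\ref{lem2} already gives $g(\textbf{x})\ge\tau$, which is at least $\frac{(1-\epsilon)\tau}{2}$ in the monotone case and at least $\frac{(1-\epsilon)\tau}{3}$ in the non-monotone case since $\epsilon\in(0,1]$ (more precisely $\tau \ge (1-\epsilon)\tau/r$ for $r\in\{2,3\}$). Otherwise $\textbf{x}$ contains no $big$ element; I claim this forces $\mathcal{X}$ to contain no $big$ element either. Indeed, if some $x\in\mathcal{X}$ were $big$, then at the moment $x$ is processed the test in Line~\ref{alLine1} succeeds and the algorithm sets $\textbf{x}:=(x,i')$; and once $\textbf{x}$ equals a single $big$ element, no subsequent step can delete it — Line~\ref{alLine2} only appends pairs via $\sqcup$, and since $w(\textbf{x})\le A$ is maintained, a later $big$ element would only replace it by another $big$ element. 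Hence the final $\textbf{x}$ would still contain a $big$ element, a contradiction. So $\mathcal{X}$ has no $big$ element, and Lemma~\ref{lem4} applies, yielding $g(\textbf{x})\ge\frac{(1-\epsilon)\tau}{2}$ (monotone) or $g(\textbf{x})\ge\frac{(1-\epsilon)\tau}{3}$ (non-monotone). Combining the two scenarios finishes the proof.

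The main obstacle — really the only delicate point — is the bookkeeping argument showing that ``$\textbf{x}$ contains no $big$ element'' is equivalent (for the purpose of the proof) to ``$\mathcal{X}$ contains no $big$ element.'' One must be careful about the semantics of Line~\ref{alLine1}: when a $big$ element arrives, $\textbf{x}$ is \emph{overwritten} rather than extended, so one needs to track that a $big$ element, once adopted, is never displaced by a non-$big$ one, and that the guard $g((x,i'))\ge\tau$ in Line~\ref{alLine1} is monotone-safe. Everything else is a direct citation of the two preceding lemmas plus the trivial inequality $\tau\ge(1-\epsilon)\tau/r$.

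\begin{proof}
The bound on $w(\textbf{x})$ is immediate: every update of $\textbf{x}$ in Algorithm~\ref{AL1} is performed only when the corresponding guard ($w(x)\le A$ in Line~\ref{alLine1}, or $w(\textbf{x})+w(x)\le A$ in Line~\ref{alLine2}) holds, so the invariant $w(\textbf{x})\le A$ is maintained at all times; hence $w(\textbf{x})\le A=\frac{3-\epsilon}{2\epsilon}\overline{w(\textbf{v})}$ if $g$ is monotone and $w(\textbf{x})\le A=\frac{4-\epsilon}{3\epsilon}\overline{w(\textbf{v})}$ otherwise.

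For the utility guarantee we distinguish two cases. If the output $\textbf{x}$ contains a $big$ element, then by Lemma~\ref{lem2} we have $g(\textbf{x})\ge\tau$, and since $\epsilon\in(0,1]$ this gives $g(\textbf{x})\ge\frac{(1-\epsilon)\tau}{2}$ in the monotone case and $g(\textbf{x})\ge\frac{(1-\epsilon)\tau}{3}$ in the non-monotone case.

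Otherwise $\textbf{x}$ contains no $big$ element. We argue that then $\mathcal{X}$ contains no $big$ element at all. Suppose, for contradiction, that some $x\in\mathcal{X}$ is $big$, i.e.\ $w(x)\le A$ and $\max_{i\in[k]}g((x,i))\ge\tau$. When $x$ is processed, the condition in Line~\ref{alLine1} is satisfied, so $\textbf{x}$ becomes the singleton $(x,i')$ with $i'=\arg\max_{i\in[k]}g((x,i))$, which is a $big$ element. Thereafter $\textbf{x}$ can only change through Line~\ref{alLine2}, which appends pairs via $\sqcup$ and never removes the already-selected $big$ element, or through Line~\ref{alLine1} applied to another $big$ element, which again leaves $\textbf{x}$ equal to a $big$ element. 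Hence the final $\textbf{x}$ would contain a $big$ element, contradicting our assumption. Therefore $\mathcal{X}$ contains no $big$ element, and Lemma~\ref{lem4} yields $g(\textbf{x})\ge\frac{(1-\epsilon)\tau}{2}$ if $g$ is monotone and $g(\textbf{x})\ge\frac{(1-\epsilon)\tau}{3}$ if $g$ is non-monotone. Combining the two cases completes the proof.
\end{proof}
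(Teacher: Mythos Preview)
Your proof is correct and follows essentially the same approach as the paper: derive the cost bound from the invariant $w(\textbf{x})\le A$ maintained by the guards, and derive the utility bound by invoking Lemma~\ref{lem2} and Lemma~\ref{lem4}. In fact you are more careful than the paper, which simply cites the two lemmas without explicitly verifying that their hypotheses (``$\textbf{x}$ contains a $big$ element'' versus ``$\mathcal{X}$ contains no $big$ element'') are jointly exhaustive; your bookkeeping argument that a $big$ element, once adopted, is never displaced by a non-$big$ one cleanly closes that gap.
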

\begin{proof}
Since $A$ is an upper bound on the cost of $w(\textbf{x})$, if $g$ is monotone, we have
\[
w(\textbf{x})\leq \frac{3-\epsilon}{2\epsilon}\overline{w(\textbf{v})}.
\]
Also, for the non-monotone case, we have $w(\textbf{x})\leq \frac{4-\epsilon}{3\epsilon}\overline{w(\textbf{v})}$. In Lemma \ref{lem2} and Lemma \ref{lem4}, we have shown that $g(\textbf{x})\geq \frac{(1-\epsilon)\tau}{2}$ and $g(\textbf{x})\geq \frac{(1-\epsilon)\tau}{3}$ for monotone and non-monotone case, respectively.
\end{proof}
\qed

\section{The 2-pass streaming algorithm}

We assume that the optimal value $w(\textbf{v})$ is known in Algorithm \ref{AL1}. In this section,
we present an algorithm that  removes this assumption by streaming the ground set $\mathcal{X}$ twice and provide a bicriteria approximation ratio. This algorithm is Algorithm \ref{AL2}.

\floatname{algorithm}{Algorithm}
\renewcommand{\algorithmicrequire}{\textbf{Input:}}
\renewcommand{\algorithmicensure}{\textbf{Output:}}

\begin{algorithm}[htb]
\caption{\textbf{}}
\label{AL2}
        \begin{algorithmic}[1]
        \REQUIRE $\tau$, $\epsilon$
            \STATE Set $w_{min}:=+\infty$ and $w_{\max}:=-\infty$
            \FOR {$x\in \mathcal{X}$}
                \IF {$w(x)<w_{min}$}
                    \STATE $w_{min}:=w(x)$
                \ELSIF {$w(x)>w_{\max}$}
                    \STATE $w_{\max}:=w(x)$
                \ENDIF
            \ENDFOR
            \STATE Construct the guess set $\Lambda$ according to $w_{min}$ and $w_{\max}$.
            \STATE Initialize inputs ($\tau$, $\lambda_j, \epsilon)$ of Algorithm \ref{AL1} for all $\lambda_j\in \Lambda$.
            \FOR {$x\in \mathcal{X}$}
                \STATE Pass $x$ to all inputs ($\tau$, $\lambda_j, \epsilon)$ of Algorithm \ref{AL1}
            \ENDFOR
            \STATE Let $\textbf{x}_j$ denote the output solution ($\tau$, $\lambda_j, \epsilon)$ of Algorithm \ref{AL1}
            \STATE Let $H:=\{\textbf{x}_j|g(\textbf{x}_j)\geq \frac{(1-\epsilon)\tau}{r},j=0,1,\dots,l\}$ where $r=2$ if $g$ is monotone; otherwise, $r=3$
            \STATE $\textbf{x}:=\arg\min_{\textbf{x}\in H}w(\textbf{x})$
            \RETURN {$\textbf{x}$}
        \end{algorithmic}
\end{algorithm}

Algorithm \ref{AL2} first streams the ground set $\mathcal{X}$ in order to find $w_{min}$ and $w_{\max},$ where  $w_{min}=\min_{x\in\mathcal{X}}w(x)$ and $w_{\max}=\max_{x\in\mathcal{X}}w(x)$. Let $\gamma = \frac{w_{min}}{w_{\max}}$ and $l=\frac{\ln \frac{\gamma}{n}}{\ln(1-\epsilon)}$. Since
$w_{min}\leq w(\textbf{v}) \leq nw_{\max},$ we  construct the guessed set $\Lambda$ in which every item in $\Lambda$ is a guess of $w(\textbf{v})$. Here $\Lambda$ is defined as
\[
\Lambda =\{(1-\epsilon)^jnw_{\max}|j=0,1,\dots,l\}.
\]
Let $\lambda_j=(1-\epsilon)^jnw_{\max}$. For each $\lambda_j\in \Lambda$, Algorithm \ref{AL2} initializes an input of Algorithm \ref{AL1}. Denote $\textbf{x}_j$ as the output solution of input $(\tau, \lambda_j,\epsilon)$ of Algorithm \ref{AL1}.
Let the threshold value corresponding to $\lambda_j$ be $\theta_j=\frac{\epsilon\tau}{\lambda_j}$.
Let the upper bound cost $A_j=\frac{3-\epsilon}{2\epsilon}\lambda_j$ or $\frac{4-\epsilon}{3\epsilon}\lambda_j$ depending on the whether $g$ is monotonic or not.

In the second streaming over $\mathcal{X}$, each arriving element $x$ would pass all inputs of Algorithm \ref{AL1} to construct the corresponding solution $\textbf{x}_j$ until the streaming is over. Let $H=\{\textbf{x}_j|g(\textbf{x}_j)\geq \frac{(1-\epsilon)\tau}{r},j=0,1,\dots,l\}$ where $r$ depends on the monotonicity of $g$. Algorithm \ref{AL2} outputs its solution $\textbf{x}$ as $\arg\min_{\textbf{x}\in H}\{w(\textbf{x})\}$.
\begin{theorem}\label{thm2}
\rm Algorithm \ref{AL2} is $(\frac{3-\epsilon}{2\epsilon(1-\epsilon)}, \frac{1-\epsilon}{2})$ and $(\frac{4-\epsilon}{3\epsilon(1-\epsilon)}, \frac{1-\epsilon}{3})$-bicriteria approximation for monotone and non-monotone cases, respectively. The algorithm uses at most $O(n\frac{\ln \frac{\gamma}{n}}{\ln(1-\epsilon)})$ memory and makes at most $O(k\frac{\ln \frac{\gamma}{n}}{\ln(1-\epsilon)})$ queries per element.
\end{theorem}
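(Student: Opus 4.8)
The plan is to reduce the analysis of Algorithm \ref{AL2} to Theorem \ref{thm1} by showing that the guess set $\Lambda$ contains a value $\lambda_j$ that is a good proxy for $w(\textbf{v})$. First I would establish the bracketing bounds $w_{min}\le w(\textbf{v})\le n\,w_{\max}$: the lower bound holds because any feasible solution for Problem (\ref{Pro1}) is nonempty (since $\tau>0=g(\textbf{0})$ by normalization, so $\textbf{v}\ne\textbf{0}$) and $w_{min}$ is the smallest weight, while the upper bound holds since $|E(\textbf{v})|\le n$ and each element weighs at most $w_{\max}$. The sequence $\lambda_j=(1-\epsilon)^j n\,w_{\max}$ decreases geometrically from $n\,w_{\max}$ down to $\lambda_l=(1-\epsilon)^l n\,w_{\max}$; plugging in $l=\ln(\gamma/n)/\ln(1-\epsilon)$ gives $\lambda_l = (\gamma/n)\cdot n\,w_{\max}=\gamma\, w_{\max}=w_{min}$. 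Hence the interval $[w_{min},\,n\,w_{\max}]$ is covered by consecutive ratios $(1-\epsilon)$, so there exists an index $j^\star$ with $w(\textbf{v})\le \lambda_{j^\star}\le \frac{w(\textbf{v})}{1-\epsilon}$ (take $j^\star$ the largest index with $\lambda_{j^\star}\ge w(\textbf{v})$; the next one $\lambda_{j^\star+1}=(1-\epsilon)\lambda_{j^\star}<w(\textbf{v})$, yielding the upper bound, and $j^\star$ is well-defined since $\lambda_0=n\,w_{\max}\ge w(\textbf{v})$).

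Next I would invoke Theorem \ref{thm1} with the valid guess $\overline{w(\textbf{v})}=\lambda_{j^\star}$ (valid because $\lambda_{j^\star}\ge w(\textbf{v})$). In the monotone case this gives a solution $\textbf{x}_{j^\star}$ with $g(\textbf{x}_{j^\star})\ge \frac{(1-\epsilon)\tau}{2}$ and $w(\textbf{x}_{j^\star})\le \frac{3-\epsilon}{2\epsilon}\lambda_{j^\star}\le \frac{3-\epsilon}{2\epsilon}\cdot\frac{w(\textbf{v})}{1-\epsilon}=\frac{3-\epsilon}{2\epsilon(1-\epsilon)}w(\textbf{v})$, and symmetrically $\frac{4-\epsilon}{3\epsilon(1-\epsilon)}w(\textbf{v})$ in the non-monotone case. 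Since $g(\textbf{x}_{j^\star})\ge \frac{(1-\epsilon)\tau}{r}$, the index $j^\star$ contributes to $H$, so $H\ne\emptyset$ and the output $\textbf{x}=\arg\min_{\textbf{x}\in H}w(\textbf{x})$ satisfies $w(\textbf{x})\le w(\textbf{x}_{j^\star})$, giving the claimed cost bound; the utility bound $g(\textbf{x})\ge\frac{(1-\epsilon)\tau}{r}$ is immediate from the definition of $H$. One subtlety to address here: one must observe that a single pass of the data suffices to run all $|\Lambda|$ copies of Algorithm \ref{AL1} in parallel, and that the element-level logic of Algorithm \ref{AL1} (big-element detection and the threshold/budget test) is unaffected by this interleaving.

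Finally, for the resource bounds I would count $|\Lambda|=l+1=O\!\big(\frac{\ln(\gamma/n)}{\ln(1-\epsilon)}\big)$ parallel instances; each instance of Algorithm \ref{AL1} stores a single $k$-set $\textbf{x}_j$ with $|E(\textbf{x}_j)|=O(n)$ pairs, so total memory is $O\!\big(n\,\frac{\ln(\gamma/n)}{\ln(1-\epsilon)}\big)$, and per arriving element each instance evaluates $O(k)$ marginal gains (one $\arg\max$ over the $k$ positions for the big-element test and one for the threshold test), giving $O\!\big(k\,\frac{\ln(\gamma/n)}{\ln(1-\epsilon)}\big)$ queries per element. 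The main obstacle is not any single hard inequality but getting the guess-set arithmetic exactly right — in particular verifying that the chosen $l$ makes $\lambda_l$ land at or below $w_{min}$ so that the whole range $[w_{min},n\,w_{\max}]$ is genuinely covered, and that the resulting $\lambda_{j^\star}$ is simultaneously an overestimate of $w(\textbf{v})$ (needed to apply Theorem \ref{thm1}) and within a $(1-\epsilon)^{-1}$ factor of it (needed for the cost ratio); everything else then follows mechanically from Theorem \ref{thm1}.
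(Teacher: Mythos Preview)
Your proposal is correct and follows essentially the same route as the paper: both pick the index $j^\star$ (the paper's $q$) with $\lambda_{j^\star+1}\le w(\textbf{v})\le\lambda_{j^\star}$, apply Theorem~\ref{thm1} with $\overline{w(\textbf{v})}=\lambda_{j^\star}$ to bound $w(\textbf{x}_{j^\star})$ and $g(\textbf{x}_{j^\star})$, conclude $\textbf{x}_{j^\star}\in H$, and then count $|\Lambda|$ for the memory and query bounds. Your write-up is in fact a bit more careful than the paper's in verifying that $\lambda_l=w_{\min}$ so the guess range genuinely covers $w(\textbf{v})$, and in making explicit that $w(\textbf{x})\le w(\textbf{x}_{j^\star})$ because the output is the minimum-cost member of $H$.
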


\begin{proof}
Suppose that $g$ is monotone. Instead of analyzing the output $\textbf{x}$, we consider a specific $\lambda_q$ here. Suppose that $\lambda_q$ satisfies $\lambda_{q+1}\leq w(\textbf{v}) \leq \lambda_q$. Then, by Theorem \ref{thm1}
\[
w(\textbf{x}_q)\leq A_q\leq \frac{3-\epsilon}{2\epsilon}\lambda_q=\frac{3-\epsilon}{2\epsilon}\frac{1}{1-\epsilon}(1-\epsilon)^{q+1}nw_{\max}\leq \frac{3-\epsilon}{2\epsilon(1-\epsilon)}w(\textbf{v}).
\]
and $g(\textbf{x}_q)\geq \frac{(1-\epsilon)\tau}{2}$ holds if $g$ is monotone. This means that $H$ is not empty since $\textbf{x}_q$ is a candidate solution. Also, for the non-monotone case, by a similar argument, we  obtain
\[
w(\textbf{x}_q)\leq \frac{4-\epsilon}{3\epsilon(1-\epsilon)}w(\textbf{v}).
\]
and $g(\textbf{x}_q)\geq \frac{(1-\epsilon)\tau}{3}$ holds.

The only thing left is to analyze the memory complexity and time complexity of the algorithm. Since the guess set $\Lambda$ has at most $l=\frac{\ln \frac{\gamma}{n}}{\ln(1-\epsilon)}$ guesses of $w(\textbf{v})$. Hence, the algorithm would maintain at most $l$ running inputs, where each input contains at most $n$ elements. So the memory is at most $O(n\frac{\ln \frac{\gamma}{n}}{\ln(1-\epsilon)})$. The algorithm uses $O(k)$ queries for each element, and hence at most $O(k\frac{\ln \frac{\gamma}{n}}{\ln(1-\epsilon)})$ queries per element.
\end{proof}

\section{One-pass streaming algorithm}

In this section, we present a bicriteria approximation algorithm for Problem \ref{Pro1} which takes only one pass over the ground set $\mathcal{X}$. The one pass algorithm is more applicable in the real world where one streaming over the ground set may never store the elements.

\floatname{algorithm}{Algorithm}
\renewcommand{\algorithmicrequire}{\textbf{Input:}}
\renewcommand{\algorithmicensure}{\textbf{Output:}}

\begin{algorithm}[htpb]
\caption{\textbf{Single-pass Algorithm}}
\label{AL3}
        \begin{algorithmic}[1]
        \REQUIRE $\tau$, $B$ and $\epsilon$.
            \STATE Set $\textbf{x}_j:=\textbf{0}$, $\forall j\in\mathbb{N}$
            \STATE Set $L:= -\infty$, $U:=B$.
            \FOR {$x\in \mathcal{X}$}
            \STATE $i':=\arg\max_{i\in[k]}g((x,i))$
            \IF {$\frac{g((x,i'))}{w(x)}>\frac{\epsilon\tau}{L}$}\label{alLine3}
                \STATE $L:=\frac{\epsilon\tau w(x)}{g((x,i'))}$
            \ENDIF
            \STATE Let $\Lambda:=\{(1-\epsilon)^jB|L\leq (1-\epsilon)^jB\leq U,j\in \mathbb{N}\}$
            \STATE Construct $\theta_j$, $A_j$ by each $\lambda_j$ in $\Lambda$ and the monotonicity of $g$
            \FOR {$\lambda_j\in \Lambda$}
                \IF {$w(x)\leq A_j$ and $g((x,i'))\geq \tau$}
                    \STATE $\textbf{x}_j := (x, i')$
                \ELSE
                    \STATE $i':=\arg\max_{i\in[k]} \Delta_{x,i}g(\textbf{x}_j)$
                    \IF {$\frac{\Delta_{x,i'}g(\textbf{x}_j)}{w(x)}\geq \theta_j$ and $w(\textbf{x}_j)+w(x)\leq A_j$}
                        \STATE $\textbf{x}_j := \textbf{x}_j\sqcup (x,i')$
                    \ENDIF
                \ENDIF
            \STATE Let $r=2$ if $g$ is monotone; otherwise $r=3$
            \IF {$g(\textbf{x}_j)\geq \frac{(1-\epsilon)\tau}{r}$}
                \STATE $U:= (1-\epsilon)^jB$
            \ENDIF
            \ENDFOR
            \ENDFOR
            \STATE $\textbf{x} = \arg\max\{g(\textbf{x}_j)|\lambda_j\in \Lambda\}$
            \RETURN {$\textbf{x}$}
        \end{algorithmic}
\end{algorithm}

In Algorithm \ref{AL2}, it is easy to get an appropriate guess of the optimal value of the Problem (\ref{Pro1}), but it is hard for single pass algorithm, since it is difficult to efficiently determine a useful upper bound of $w(\textbf{v})$ without seeing the ground set $\mathcal{X}$.
To tackle this difficulty, an upper bound $B$ of $w(\textbf{v})$ is taken as one of the input in \cite{Crawford2023, Norouzi2016}. Based on $B$, we present a dynamic way for guessing the optimal value of $w(\textbf{v})$ in Algorithm \ref{AL3} to reduce the memory and time complexities. The pseudocode for the single-pass streaming is listed in Algorithm \ref{AL3}.

Algorithm \ref{AL3} takes $\tau$, $B$ and $\epsilon$ as inputs. The algorithm first initializes $L =-\infty$ which denotes the lower bound of the guess of $w(\textbf{v})$.
In particular, $L$ is updated to $\epsilon\tau w(x)/g((x,i'))$ if the new arriving element $x$ satisfies that $L> \epsilon\tau w(x)/g((x,i'))$ where $i'=\arg\max_{i\in[k]}g((x,i))$.
Then Algorithm \ref{AL3} generates the guess set $\Lambda:=\{(1-\epsilon)^jB|L\leq (1-\epsilon)^jB\leq U,j\in \mathbb{Z}_+\}$. 
For each guess value $\lambda_j$ in $\Lambda$, we also use the notation $\theta_j$, $A_j$ and $\textbf{x}_j$ which have the same meanings as in Section 4.
$U$ is updated to $\lambda_j$ if the corresponding $\textbf{x}_j$ satisfies $g(\textbf{x}_j)\geq \frac{(1-\epsilon)\tau}{r}$ where $r$ depends on the monotonicity of $g$. Therefore $U$ is non-increasing in the algorithm process.
Algorithm \ref{AL3} can be viewed as running multiple instances of modified Algorithm \ref{AL2} in parallel for each $\lambda_j\in \Lambda$.
If a guess $\lambda_j$ is larger than $U$, it will be discarded from $\Lambda$ in the rest of algorithm precess. Once the ground set is scanned over the stream, Algorithm \ref{AL3} returns the $\arg\max\{g(\textbf{x}_j)|\lambda_j\in \Lambda\}$ as the solution.

\begin{theorem}\label{thm3}
Algorithm \ref{AL3} is a $(\frac{3-\epsilon}{2\epsilon(1-\epsilon)}, \frac{(1-\epsilon)\tau}{2})$ and a $(\frac{4-\epsilon}{3\epsilon(1-\epsilon)}, \frac{(1-\epsilon)\tau}{3})$-approximation algorithm for monotone and non-monotone cases, respectively. Let $\kappa=\max_{x\in\mathcal{X},i\in[k]}g((x,i))/w(x)$. Algorithm \ref{AL3} uses $\frac{n\ln \frac{\epsilon\tau}{B\kappa}}{\ln (1-\epsilon)}$ memory and requires $\frac{k\ln \frac{\epsilon\tau}{B\kappa}}{\ln (1-\epsilon)}$ queries for a $k$-submodular function.
\end{theorem}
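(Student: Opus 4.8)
The plan is to mirror the two-phase analysis already used for Theorem~\ref{thm2}, adapting it to the fact that the guesses in Algorithm~\ref{AL3} are maintained dynamically between the sliding bounds $L$ and $U$ rather than fixed a priori. First I would argue correctness of the bicriteria guarantee. The key observation is that it suffices to track a single ``good'' guess $\lambda_q$ with $\lambda_{q+1}\le w(\textbf{v})\le\lambda_q$, exactly as in Theorem~\ref{thm2}; the work is to check that this $\lambda_q$ actually survives in $\Lambda$ throughout the run, i.e. that it is never discarded because it exceeds the current $U$ and never falls below the current $L$. For the lower side, I would show that $L$ is only ever raised to a value of the form $\epsilon\tau w(x)/g((x,i'))$, and that any such value is at most $w(\textbf{v})$ (since covering requires marginal density at least $\tau/w(\textbf{v})$ somewhere, so a single element can have density at most that only if its cost is small; more carefully, the update happens only when $g((x,i'))/w(x)$ is large, so $\epsilon\tau w(x)/g((x,i'))\le \epsilon\,w(\textbf{v})\le w(\textbf{v})$). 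Hence $L\le w(\textbf{v})\le\lambda_q$ always, so $\lambda_q$ is never cut from below. For the upper side, I would observe that $U$ is lowered to some $\lambda_j$ only when the associated $\textbf{x}_j$ has already reached utility $\ge(1-\epsilon)\tau/r$; so any guess $\lambda_j$ that gets discarded was already a witness of a feasible-enough solution of cost $\le A_j=\frac{3-\epsilon}{2\epsilon}\lambda_j$ (or the non-monotone analogue). Combining, either $\lambda_q$ itself survives to the end — in which case Theorem~\ref{thm1} applied to the sub-run with guess $\lambda_q$ gives $g(\textbf{x}_q)\ge(1-\epsilon)\tau/r$ and $w(\textbf{x}_q)\le A_q\le\frac{3-\epsilon}{2\epsilon(1-\epsilon)}w(\textbf{v})$ — or it was discarded earlier, but then the $\lambda_j$ that triggered its discard is itself $\le\lambda_q$, so its solution already satisfies the required utility bound and has cost $\le\frac{3-\epsilon}{2\epsilon}\lambda_j\le\frac{3-\epsilon}{2\epsilon(1-\epsilon)}w(\textbf{v})$. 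In either case the returned solution $\textbf{x}=\arg\max_j g(\textbf{x}_j)$ has utility at least that of this witness and the cost bound carries over, giving the claimed $(\frac{3-\epsilon}{2\epsilon(1-\epsilon)},\frac{(1-\epsilon)\tau}{2})$ guarantee in the monotone case and $(\frac{4-\epsilon}{3\epsilon(1-\epsilon)},\frac{(1-\epsilon)\tau}{3})$ in the non-monotone case.

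The subtle point — and what I expect to be the main obstacle — is that Theorem~\ref{thm1} was stated for Algorithm~\ref{AL1}, which processes every element of $\mathcal{X}$ with a fixed threshold, whereas in Algorithm~\ref{AL3} a particular guess $\lambda_j$ may only enter $\Lambda$ partway through the stream (once $(1-\epsilon)^jB\ge L$), so its copy $\textbf{x}_j$ skips the elements seen before it was activated. I would handle this by arguing that skipping those early elements is harmless for $\lambda_q$: an element $x$ seen before $\lambda_q$ was activated had $\epsilon\tau w(x)/g((x,i'))>\lambda_q$ at that time (that is the only reason a not-yet-activated guess below the then-current $L$ could be excluded), which forces $\max_i g((x,i))/w(x)<\epsilon\tau/\lambda_q=\theta_q/1$... hence $x$ would have failed the threshold test in Line~\ref{alLine2} of the $\lambda_q$-run anyway, so its omission does not change $\textbf{x}_q$. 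Then the analysis of Lemma~\ref{lem4} goes through verbatim for the $\lambda_q$-run: every element of $E(\textbf{v})\setminus E(\textbf{x}_q)$ is still either $good$ (failed the density test) or $bad$ (failed the budget test), and the same chain of inequalities yields $g(\textbf{x}_q)\ge(1-\epsilon)\tau/r$.

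Finally, for the complexity bounds I would count $|\Lambda|$. The guesses live in $[L,U]\subseteq[L,B]$ and are spaced by factor $(1-\epsilon)$, so at any time $|\Lambda|\le \frac{\ln(L/B)}{\ln(1-\epsilon)}$; since $L$ is only raised and is bounded below by $\epsilon\tau/\kappa$ where $\kappa=\max_{x,i}g((x,i))/w(x)$ (because any triggering ratio $g((x,i'))/w(x)\le\kappa$ gives $L=\epsilon\tau w(x)/g((x,i'))\ge\epsilon\tau/\kappa$), we get $|\Lambda|\le\frac{\ln(\epsilon\tau/(B\kappa))}{\ln(1-\epsilon)}$ at all times. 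Each of the $n$ elements is passed to every surviving guess, costing $O(k)$ oracle queries there (the $\arg\max$ over $[k]$), so the per-element query count is $\frac{k\ln(\epsilon\tau/(B\kappa))}{\ln(1-\epsilon)}$ and each guess stores $O(n)$ pairs, giving total memory $\frac{n\ln(\epsilon\tau/(B\kappa))}{\ln(1-\epsilon)}$, as claimed.
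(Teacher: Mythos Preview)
Your overall strategy matches the paper's: both isolate the canonical guess $\lambda_q$ with $\lambda_{q+1}\le w(\textbf{v})\le\lambda_q$, argue that its run (or that of a smaller guess that displaces it via the $U$-update) produces a solution meeting the utility threshold, and then read off the cost bound from $A_{j^*}\le A_q\le\frac{3-\epsilon}{2\epsilon(1-\epsilon)}w(\textbf{v})$. Your second paragraph is the heart of the argument and is exactly what the paper relies on (though the paper states it more tersely): a guess $\lambda_j$ enters $\Lambda$ precisely when the first element of density at least $\theta_j$ arrives, so every element seen before $\lambda_q$ is activated has $\max_i g((x,i))/w(x)<\theta_q$ and would have failed the threshold test against the empty $\textbf{x}_q$ anyway; hence the late start does not change $\textbf{x}_q$, and Lemma~\ref{lem4} applies verbatim.

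There is, however, a genuine confusion in your first paragraph that you should excise. You claim that ``$L$ is only ever raised'' and that every updated value of $L$ is at most $w(\textbf{v})$, concluding $L\le w(\textbf{v})\le\lambda_q$ throughout. This is false on both counts. After its first update from $-\infty$, $L$ is \emph{non-increasing}: the update in Line~\ref{alLine3} fires exactly when a higher-density element arrives, and then $L=\epsilon\tau w(x)/g((x,i'))$ gets smaller. In particular, if the early elements in the stream have low density, $L$ can be much larger than $w(\textbf{v})$ and $\lambda_q$ for a while, so $\lambda_q$ is genuinely absent from $\Lambda$ at first. That is precisely why your second paragraph is needed; the ``lower side'' argument of the first paragraph does no work and should be removed. (The same confusion reappears in your complexity paragraph when you write ``$L$ is only raised and is bounded below by $\epsilon\tau/\kappa$''; the bound $L\ge\epsilon\tau/\kappa$ is correct, but it holds because $L$ decreases toward that value, not because it is raised.) With that correction your argument is complete and agrees with the paper's, while being more explicit about the late-activation point and about the case where $\lambda_q$ is evicted by a $U$-update.
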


\begin{proof}
For any $(1-\epsilon)^jB$ if $L>(1-\epsilon)^jB$, when a new element $x$ arrives, if $\frac{g((x,i'))}{w(x)}\geq \frac{\epsilon\tau}{(1-\epsilon)^jB}$ holds where $i'=\arg\max_{i\in[k]}g((x,i))$, then we  obtain
\[
\frac{g((x,i'))}{w(x)}\geq \frac{\epsilon\tau}{(1-\epsilon)^jB}\geq \frac{\epsilon\tau}{L}.
\]
It implies that $(1-\epsilon)^jB$ is a new lower bound for the guess of the $w(\textbf{v})$.
Then Algorithm \ref{AL3} sets $L=\frac{\epsilon\tau w(x)}{g((x,i'))}$. Thus $(1-\epsilon)^jB$ will be added to guess set $\Lambda$.
Therefore, the guess $\lambda_q=(1-\epsilon)^qB$ such that $\lambda_{q+1}\leq w(\textbf{v})\leq \lambda_q$ will be added to the guess set during the algorithm process since the assumption is $w(\textbf{v})\leq B$.
At the end of the algorithm, $U$ is at most $(1-\epsilon)^qB$ and the final solution $\textbf{x}$ of Algorithm \ref{AL3} satisfies $g(\textbf{x})\geq \frac{(1-\epsilon)\tau}{r}$. The desired results now follow from Theorem \ref{thm2}.

At the end of the Algorithm \ref{AL3}, $L=\frac{\epsilon\tau}{\kappa}$ and so the guess set $\Lambda$ has at most $\frac{\ln\frac{\epsilon\tau}{B\kappa}}{\ln(1-\epsilon)}$ guess values. Thus Algorithm \ref{AL3} uses at most $\frac{n\ln\frac{\epsilon\tau}{B\kappa}}{\ln(1-\epsilon)}$ memory and at most $\frac{k\ln\frac{\epsilon\tau}{B\kappa}}{\ln(1-\epsilon)}$ queries of a $k$-submodular function oracle.
\end{proof}

\section{Conclusion}

We provide three streaming algorithms for the weighted $k$-Submodular Cover problem with the theoretical guarantee of approximation ratio, memory and number of queries. The approximation ratio is $O(\frac{1}{\epsilon})$ for both monotone and non-monotone case. The one-pass algorithm would use $O(n\ln(\frac{\epsilon\tau}{B\kappa})/(\ln(1-\epsilon)))$ memory and at most $O(k\ln(\frac{\epsilon\tau}{B\kappa})/(\ln(1-\epsilon)))$ queries of the $k$-submodular function oracle. In the unconstrained non-monotone $k$-submodular maximization problem, the best result is near $\frac{1}{2}$.
But in our algorithm, $\beta$ is $\frac{1-\epsilon}{3}$ when the function is non-monotone. In the future, we'd like to investigate how to close this gap without increasing the performance cost too much.




\section*{References}

\end{document}